\journalname{Springer}
\newcommand{\eps}{\ensuremath{\epsilon}}
\newcommand{\abs}[1]{\ensuremath{\lvert{#1}\rvert}}
\newcommand{\EE}{\ensuremath{\mathbb{E}}}
\newcommand{\PP}{\ensuremath{\mathbb{P}}}
\newcommand{\one}{\mbox{1\hspace{-0.38em}\fontsize{10.5}{10}\selectfont\textrm{1}}}
\begin{document}

\title{Efficient generation of random de\-range\-ments with the expected distribution of cycle lengths}

\titlerunning{Efficient generation of random de\-range\-ments}

\author{J. Ricardo G. Mendon\c{c}a\,\href{https://orcid.org/0000-0002-5516-0568}{\includegraphics[scale=0.3]{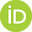}}}

\authorrunning{J. R. G. Mendon\c{c}a}

\institute{J. Ricardo G. Mendon\c{c}a \at Escola de Artes, Ci\^{e}ncias e Humanidades, Universidade de S\~{a}o Paulo, 03828-000 S\~{a}o Paulo, SP, Brazil \\ \email{jricardo@usp.br}}

\date{Submitted: \today}

\maketitle

\begin{abstract}
We show how to generate random de\-range\-ments efficiently by two different techniques: random restricted transpositions and sequential importance sampling. The algorithm employing restricted transpositions can also be used to generate random fixed-point-free involutions only, a.\,k.\,a.\ random perfect matchings on the complete graph. Our data indicate that the algorithms generate random samples with the expected distribution of cycle lengths, which we derive, and for relatively small samples, which can actually be very large in absolute numbers, we argue that they generate samples indistinguishable from the uniform distribution. Both algorithms are simple to understand and implement and possess a performance comparable to or better than those of currently known methods. Simulations suggest that the mixing time of the algorithm based on random restricted transpositions (in the total variance distance with respect to the distribution of cycle lengths) is $O(n^{a}\log{n}^{2})$ with $a \simeq \frac{1}{2}$ and $n$ the length of the de\-range\-ment. We prove that the sequential importance sampling algorithm generates random de\-range\-ments in $O(n)$ time with probability $O(1/n)$ of failing.

\keywords{Restricted permutation \and random transposition walk \and random perfect matching \and switch Markov chain \and mixing time}

\subclass{65C60 \and 68R05 \and 05A05}

\end{abstract}


\section{\label{intro}Introduction}

De\-range\-ments are permutations $\sigma = \sigma_{1} \cdots \sigma_{n}$ on integer $n \geq 2$ labels such that $\sigma_{i} \ne i$ for all $i = 1, \dots, n$. De\-range\-ments are useful in a number of applications like in the testing of software branch instructions and random paths and data randomization and experimental design \citep{bacher,pd-rlg-sph,sedgewick}. A well known algorithm to generate random derange\-ments is Sattolo's algorithm, that outputs a random cyclic de\-range\-ment in $O(n)$ time \citep{gries,prodinger,sattolo,wilson}. An $O(2n)$ algorithm to generate random de\-range\-ments in general (not only cyclic de\-range\-ments) has been given by \citet{analco} and \citet{iran}. Algorithms to generate all $n$-de\-range\-ments in lexicographic or Gray order have also been developed \citep{akl,baril,korsh}.

In this paper we propose two procedures to generate random de\-range\-ments with the expected distribution of cycle lengths: one based on the randomization of de\-range\-ments by random restricted transpositions (a random walk in the set of de\-range\-ments) and the other based on a simple sequential importance sampling scheme. The generation of restricted permutations by means of sequential importance sampling is closely related with the problem of estimating the permanent of a $0$-$1$matrix, an important problem in, e.\,g., graph theory, statistical mechanics, and experimental design \citep{beichl,brualdi,pd-rlg-sph}. Simulations show that the randomization algorithm samples a de\-range\-ment in $O(n^{a}\log{n}^{2})$ time, where $n$ is the size of the de\-range\-ment and $a \simeq \frac{1}{2}$, while the sequential importance sampling algorithm does it in $O(n)$ time but with a small probability $O(1/n)$ of failing. The algorithms are straighforward to understand and implement and can be modified to perform related computations of interest in many areas.

Throughout the paper we employ the expected distribution of cycle lengths to analyse the algorithms because they are such fundamental invariants of permutations from which many other statistics can be derived, for instance, the expected number of ascents, descents, or transpositions, because they offer a sufficiently ``aggregate,'' not too detailed quantity, and also because we have simple exact expressions for the probability of observing derangements with a given number of cycles with which we can compare the numerical data.


\section{\label{sec:math}Mathematical preliminaries}

Let us briefly recapitulate some notation and terminology on permutations. Detailed accounts suited to our needs are given by \citet{arratia} and \citet{charalambides}.

We denote a permutation of a set of integer $n \geq 2$ labels (an $n$-permutation), formally a bijection of $[\,n\,] = \{1, \dots, n\}$ onto itself, by $\sigma = \sigma_{1} \cdots \sigma_{n}$, where $\sigma_{i}=\sigma(i)$. If $\sigma$ and $\pi$ are two $n$-permutations, their product is given by the composition $\sigma\pi = \sigma(\pi_{1}) \cdots \sigma(\pi_{n})$. A cycle of length $k \leq n$ in a $n$-permutation $\sigma$ is a sequence of indices $i_{1}, \dots, i_{k}$ such that $\sigma(i_{1})=i_{2}$, \dots, $\sigma(i_{k-1})=i_{k}$, and $\sigma(i_{k})=i_{1}$, completing the cycle. Fixed points are $1$-cycles, transpositions are $2$-cycles. An $n$-permu\-ta\-tion with $a_{k}$ cycles of length $k$, $ 1 \leq k \leq n$, is said to be of type $(a_{1}, \dots, a_{n})$, with $\sum_{k}ka_{k}=n$. For example, the $9$-permutation $174326985=(1)(43)(6)(8)(9527)$ has $5$ cycles and is of type $(3,1,0,1)$, where we have omitted the trailing $a_{5} = \cdots = a_{9}=0$. Note that in the cycle notation $(1)(43)(6)(8)(9527)$ the parentheses are unnecessary, since each new left-to-right absolute maximum $\sigma_{i} > \max\{\sigma_{1}, \dots, \sigma_{i-1}\}$ corresponds to a new cycle (the so-called Foata's representation).

The number of $n$-permutations with $k$ cycles is given by the unsigned Stirling number of the first kind $n \brack k$. Useful formulae involving these numbers are ${0 \brack 0}=1$, ${n \brack 0}=0$, and the recursion relation ${n+1 \brack k} = n{n \brack k} + {n \brack k-1}$. We have ${n \brack n}=1$, counting just the identity permutation $\text{id}=(1)(2)\cdots(n)$, ${n \brack n-1}={n \choose 2}$, counting $n$-permuta\-tions with $n-2$ fixed points, that can be taken in ${n \choose n-2}={n \choose 2}$ different ways, plus a transposition of the remaining two labels, and ${n \brack 1}=(n-1)!$, the number of cyclic $n$-de\-range\-ments. It can also be shown that ${n \brack 2}=(n-1)!H_{n-1}$, where $H_{k} = 1+\frac{1}{2}+\cdots+\frac{1}{k}$ is the $k\,$th harmonic number. Obviously, ${n \brack 1} + \cdots + {n \brack n} = {n+1 \brack 1} = n!$, the total number of $n$-permutations.

Let us denote the set of all $n$-de\-range\-ments by ${D}_{n}$. It is well known that
\begin{equation}
\label{eq:rencontre}
{{d}_{n}} = \abs{D_{n}} = n!\Big(1-\frac{1}{1!} + \cdots + \frac{(-1)^{n}}{n!}\Big) = \Big\lfloor\frac{n!+1}{e}\Big\rfloor, \quad n \geq 1,
\end{equation}
the \textit{rencontres} numbers, where the floor function $\lfloor x \rfloor$ evaluates to the greatest integer less than or equal to $x$. Let us also denote the set of $k$-cycle $n$-derange\-ments, irrespective of their type, by ${D}_{n}^{(k)}$. The ${D}_{n}^{(k)}$ are disjoint with ${D}_{n}^{(k)}=\varnothing$ for $k > \lfloor n/2 \rfloor$. If we want to generate random $n$-de\-range\-ments over ${D}_{n} = {D}_{n}^{(1)} \cup \cdots \cup {D}_{n}^{(\lfloor n/2\rfloor)}$, we must be able to generate $k$-cycle random $n$-de\-range\-ments with probabilities
\begin{equation}
\label{eq:prob}
\PP(\sigma \in {D}_{n}^{(k)}) = \frac{d_{n}^{(k)}}{d_{n}}, 
\end{equation}
where $d_{n}^{(k)}=\abs{{D}_{n}^{(k)}}$. The following proposition establishes the cardinality of the sets ${D}_{n}^{(k)}$.

\begin{proposition}
\label{prop:dnk}
The cardinality of the set ${D}_{n}^{(k)}$ is given by
\begin{equation}
\label{eq:dnk}
d_{n}^{(k)} = \sum_{j=0}^{k}(-1)^{j}{n \choose j}{n-j \brack k-j}.
\end{equation}
\end{proposition}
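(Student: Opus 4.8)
The plan is to derive \eqref{eq:dnk} by inclusion--exclusion on fixed points, taking as the universe the set of all $n$-permutations with exactly $k$ cycles, whose cardinality ${n \brack k}$ is already at hand. For each label $i\in[\,n\,]$ let $B_i$ be the set of $n$-permutations with $k$ cycles that fix $i$. Then ${D}_{n}^{(k)}$ is exactly the set of $k$-cycle $n$-permutations lying in none of the $B_i$, so the inclusion--exclusion principle gives
\begin{equation*}
d_{n}^{(k)} = \sum_{S\subseteq[\,n\,]}(-1)^{\abs{S}}\,\Bigl|\bigcap_{i\in S}B_i\Bigr|.
\end{equation*}

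The one step that needs an argument is the evaluation of $\bigl|\bigcap_{i\in S}B_i\bigr|$ for a subset $S$ of size $j$. A permutation in this intersection is the identity on the $j$ labels of $S$ --- which therefore form $j$ one-cycles --- and restricts to an arbitrary permutation of the remaining $n-j$ labels; for the total number of cycles to be $k$, this restriction must have exactly $k-j$ cycles. Hence $\bigl|\bigcap_{i\in S}B_i\bigr|={n-j \brack k-j}$, depending on $S$ only through $j$. Since there are ${n\choose j}$ subsets of each size $j$,
\begin{equation*}
d_{n}^{(k)} = \sum_{j=0}^{n}(-1)^{j}{n\choose j}{n-j \brack k-j},
\end{equation*}
and because ${n-j \brack k-j}=0$ once $j>k$ (recall ${m \brack 0}=0$ for $m\ge 1$ and ${0 \brack 0}=1$), the sum truncates at $j=k$, which is \eqref{eq:dnk}.

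I do not anticipate a real obstacle here; the only care needed is with the boundary conventions for the Stirling numbers, plus a sanity check on small cases --- for instance $n=4$, $k=2$ gives ${4 \brack 2}-4{3 \brack 1}+6{2 \brack 0}=11-8+0=3$, matching the three derangements $(12)(34)$, $(13)(24)$, $(14)(23)$.

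As an alternative I would note the generating-function route: marking cycles by $u$, the exponential generating function of derangements is $\sum_{n\ge0}\sum_{k}d_{n}^{(k)}\frac{u^{k}x^{n}}{n!} = \exp\!\Bigl(u\sum_{m\ge2}\frac{x^{m}}{m}\Bigr) = (1-x)^{-u}e^{-ux}$, since dropping the $m=1$ term from $-\log(1-x)$ removes one-cycles. Expanding $(1-x)^{-u}=\sum_{\ell\ge0}u^{\ell}\sum_{m\ge\ell}{m \brack \ell}\frac{x^{m}}{m!}$ and $e^{-ux}=\sum_{j\ge0}(-1)^{j}u^{j}\frac{x^{j}}{j!}$ and extracting the coefficient of $u^{k}x^{n}$ reproduces the same alternating sum, with the truncation at $j=k$ now automatic.
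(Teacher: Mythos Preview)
Your proof is correct and follows the same inclusion--exclusion argument as the paper's own proof, only spelled out in more detail (the paper compresses the computation of $\bigl|\bigcap_{i\in S}B_i\bigr|$ and the truncation at $j=k$ into a single sentence). The generating-function alternative you sketch is a nice addition not present in the paper.
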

\begin{proof}
The number of $n$-permutations with $k$ cycles is $n \brack k$. Of these, $n{n-1 \brack k-1}$ have at least one fixed point, ${n \choose 2}{n-2 \brack k-2}$ have at least two fixed points, and so on. Perusal of the inclusion-exclusion principle furnishes the result. \qed
\end{proof}

\begin{proposition}
\label{prop:rec}
The numbers ${d}_{n}^{(k)}$ obey the recursion relation
\begin{equation}
\label{eq:rec}
d_{n+1}^{(k)} = n\big(d_{n}^{(k)} + d_{n-1}^{(k-1)}\big)
\end{equation}
with $d_{0}^{(0)}=1$ and $d_{n}^{(0)}=0$, $n \geq 1$.
\end{proposition}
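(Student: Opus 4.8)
The plan is to argue bijectively, splitting the set $D_{n+1}^{(k)}$ according to the cycle that contains the largest label $n+1$. Since a derangement has no fixed point, that cycle has length at least $2$, so exactly one of two disjoint situations occurs: either $n+1$ lies in a cycle of length $\geq 3$, or it lies in a transposition $(n+1\ i)$ with $i \in [\,n\,]$. Before treating these, I would dispose of the base cases: the empty permutation has no cycles and vacuously no fixed point, so $d_0^{(0)}=1$; and for $n \geq 1$ every label lies in some cycle, so any $n$-permutation has at least one cycle and $d_n^{(0)}=0$.

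For the first situation I would delete $n+1$ by splicing its cycle: writing $i=\sigma^{-1}(n+1)$ and $j=\sigma(n+1)$, redefine $\sigma(i)=j$ and leave $\sigma$ unchanged elsewhere. Because the cycle had length $\geq 3$, the labels $i$, $n+1$, $j$ are pairwise distinct, so $i \ne j$ and no fixed point is created; the outcome is an $n$-derangement with $k$ cycles. The inverse operation chooses one of the $n$ labels $i \in [\,n\,]$ and inserts $n+1$ immediately after it (i.e.\ sets $\sigma(i)=n+1$, $\sigma(n+1)=\tau(i)$): inserting a brand-new largest label never produces a fixed point and never changes the number of cycles, and it always yields a cycle of length $\geq 3$ through $n+1$ because the original cycle through $i$ already had length $\geq 2$. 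This is a bijection, so the first situation accounts for exactly $n\,d_n^{(k)}$ derangements.

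For the second situation I would delete the whole $2$-cycle $(n+1\ i)$, which leaves a derangement of the $(n-1)$-element set $[\,n\,]\setminus\{i\}$ with $k-1$ cycles; removing an entire cycle affects no other cycle, so the derangement property survives. Conversely one picks $i \in [\,n\,]$ ($n$ choices) and an arbitrary $(n-1)$-derangement on the remaining labels with $k-1$ cycles — the count depends only on the cardinality of the label set, hence equals $d_{n-1}^{(k-1)}$ — and adjoins the transposition $(n+1\ i)$. This contributes $n\,d_{n-1}^{(k-1)}$. Adding the two disjoint cases gives
\[
d_{n+1}^{(k)} = n\,d_n^{(k)} + n\,d_{n-1}^{(k-1)} = n\big(d_n^{(k)} + d_{n-1}^{(k-1)}\big),
\]
as claimed.

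The only delicate point — the \emph{main obstacle}, such as it is — is checking that the deletion maps land back inside the class of derangements: splicing $n+1$ out of a cycle of length $\geq 3$ must not glue some label to itself, and this is precisely why the length-$2$ versus length-$\geq 3$ dichotomy is the correct one to use. An alternative, purely algebraic route would start from the closed form $d_n^{(k)} = \sum_{j}(-1)^j\binom{n}{j}{n-j \brack k-j}$ of Proposition~\ref{prop:dnk} and apply the Stirling recursion ${m+1 \brack \ell} = m{m \brack \ell} + {m \brack \ell-1}$ term by term, combined with Pascal's rule $\binom{n+1}{j}=\binom{n}{j}+\binom{n}{j-1}$ and a reindexing of the sums; but the bijective argument is shorter and more transparent, so that is the one I would present.
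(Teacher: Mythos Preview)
Your proof is correct and follows essentially the same combinatorial argument as the paper: both split $D_{n+1}^{(k)}$ according to whether the largest label $n+1$ lies in a $2$-cycle or in a longer cycle, obtaining the contributions $n\,d_{n-1}^{(k-1)}$ and $n\,d_n^{(k)}$ respectively. Your write-up is in fact somewhat more careful than the paper's, since you explicitly verify that splicing out $n+1$ from a cycle of length $\geq 3$ cannot create a fixed point and that the insertion map is its inverse.
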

\begin{proof}
We give a combinatorial argument. Consider the de\-range\-ment of $n+1$ labels with $k$ cycles enumerated by $d_{n+1}^{(k)}$ according to the condition of the largest label $n+1$. This largest label is either in a $2$-cycle or it is not. If it is, it is attaching a $2$-cycle to an $n-1$-de\-range\-ment with $k-1$ cycles, of which there are $d_{n-1}^{(k-1)}$, and since it can be paired up with any of other $n$ possible labels, it contributes a factor $n\,d_{n-1}^{(k-1)}$ to $d_{n+1}^{(k)}$. If label $n+1$ is not in a $2$-cycle, then it belongs to one of the $k$ cycles of an otherwise $n$-de\-range\-ment, of which there are $d_{n}^{(k)}$, and since in each cycle of length $r$ there are $r$ possible places to insert label $n+1$ (the first and the last places within each cycle coincide) and $\sum{r}=n$, it contributes a factor $n\,d_{n}^{(k)}$ to $d_{n+1}^{(k)}$. Adding the two contributions furnishes the recursion relation (\ref{eq:rec}).
\qed
\end{proof}

The numbers ${d}_{n}^{(k)}$ are sometimes called associated Stirling number of the first kind. Equation~(\ref{eq:dnk}) recovers ${{d}_{n}^{(0)}}=0$ and ${{d}_{n}^{(1)}}={n \brack 1}=(n-1)!$ for $n \geq 1$, while we find that ${{d}_{n}^{(2)}}=(n-1)!(H_{n-2}-1)$ for $n \geq 2$. Equation~(\ref{eq:rec}) generalizes the recursion relation $d_{n+1} = {n(d_{n}+d_{n-1})}$ for the \textit{rencontres} numbers. A notable identity, valid for $n$ even, is ${{d}_{n}^{(n/2)}} = {(n-1)(n-3) \cdots 3 \cdot 1}$, the number of fixed-point-free involutions $\sigma$ such that $\sigma^{2} = \text{id}$, a.\,k.\,a. perfect matchings, see Section~\ref{sec:match}. From Eqs.~(\ref{eq:rencontre})--(\ref{eq:dnk}) we see that already for small $n$ we obtain $\PP(\sigma \in {D}_{n}^{(1)}) \simeq e/n$ and $\PP(\sigma \in {D}_{n}^{(2)}) \simeq (H_{n-2}-1)e/n$.
\begin{remark}
\label{rmk:cauchy}
One could consider the distribution of $n$-de\-range\-ments over possible cycle types (instead of cycle lengths) for a ``finer'' view of the distribution. The number of $n$-permutations of type $(a_{1}, \dots, a_{n})$ is given by Cauchy's formula
\begin{equation}
\label{eq:cycles}
k_{n}(a_{1}, \dots, a_{n}) = \frac{n!}{1^{a_{1}}a_{1}!\, \cdots\, n^{a_{n}}a_{n}!}.
\end{equation}
The analogue of (\ref{eq:prob}) is given by $\PP(\sigma \in {K}_{n}(0, a_{2}, \dots, a_{n})) = k_{n}(0, a_{2}, \dots, a_{n})/{{d}_{n}}$, where \linebreak ${K}_{n}(0, a_{2}, \dots, a_{n})$ is the conjugacy class formed by all $n$-permutations of type $(0, a_{2}, \dots, a_{n})$.


Other permutation statistics, e.\,g.\ the expected number of descents or transpositions (every cycle of length $k$ factors into $k-1$ transpositions, though) could be employed as well (see, for instance, \citet[Sec.~1.5]{matchings} for a connection with integer partitions) but they would lead to more complicate expressions for $\PP(\sigma \in \,\cdot\,)$.
\end{remark}


\section{\label{sec:random}Generating random de\-range\-ments by random transpositions}

\subsection{\label{sec:walk}The random transposition walk}

Our first approach to generate random $n$-deran\-ge\-ments consists in taking an initial $n$-de\-range\-ment and to scramble it by random restricted transpositions enough to obtain a sample distributed over $D_{n}$ according to the probabilities given by (\ref{eq:prob}). By restricted transpositions we mean swaps $\sigma_{i} \leftrightarrow \sigma_{j}$ avoiding pairs for which $\sigma_{i}=j$ or $\sigma_{j}=i$. Algorithm~\ref{alg:switch} describes the generation of random $n$-de\-range\-ments according to this idea, where $\mathit{mix}$ is a constant establishing the amount of random restricted transpositions to be attempted and $\mathit{rnd}$ is a computer generated pseudorandom uniform deviate in $(0,1)$.
\begin{remark}
\label{rmk:three}
Algorithm~\ref{alg:switch} is applicable only for $n \geq 4$, since it is not possible to connect the even permutations $231$ and $312$ by a single transposition.
\end{remark}

A good choice for the initial de\-range\-ment in Algorithm~\ref{alg:switch} is any cyclic de\-range\-ment (cycle length $k=1$), for example, $\sigma = (2\:3 \cdots n\,1)$. A particularly bad choice would be an involution ($n$ even, all cycle lengths $k=2$), for example, $\sigma = (n~n-1) \cdots (2~1)$, because then the algorithm would not be able to generate de\-range\-ments with $k \ne 2$. Incidentally, this suggests the use of Algorithm~\ref{alg:switch} to generate random fixed-point-free involutions, a.\,k.\,a.\ random perfect matchings, see Section~\ref{sec:match}. To avoid this problem we hardcoded the requirement that Algorithm~\ref{alg:switch} starts with a cyclic de\-range\-ment. If several parallel streams of random de\-range\-ments are sought, one can set different initial random cyclic de\-range\-ments from a one-line implementation of Sattolo's algorithm.
\begin{remark}
\label{rmk:split}
The minimum number of restricted transpositions necessary to take a cyclic $n$-deran\-ge\-ment into a $k$-cycle $n$-derange\-ment is $k-1$, $1 \leq k \leq \lfloor n/2 \rfloor$, since transpositions of labels that belong to the same cycle split it into two cycles,
\begin{equation}
(ab)(i_{1} \cdots i_{a-1}i_{a}i_{a+1} \cdots i_{b-1}i_{b}i_{b+1} \cdots i_{k}) =(i_{1} \cdots i_{a-1}i_{b}i_{b+1} \cdots i_{k})(i_{a+1} \cdots i_{b-1}i_{a})
\end{equation}
and, conversely, transpositions involving labels of different cycles join them into a single cycle. If Algorithm~\ref{alg:switch} is started with a cyclic de\-range\-ment then one must set $\mathit{mix} \geq n/2$.
\end{remark}

\renewcommand{\thealgorithm}{T}
\begin{algorithm}[t]
\caption{Random de\-range\-ments by random restricted transpositions}
\label{alg:switch}
\algsetup{indent=1.5em,linenosize=\small}
\begin{algorithmic}[1]
\REQUIRE Initial cyclic $n$-de\-range\-ment $\sigma_{1} \sigma_{2} \cdots \sigma_{n}$ \label{alg:switch:req}
\STATE $\mathit{mix} \gets$ number of restricted transpositions to attempt
\FOR{$m=1$ \TO $\mathit{mix}$} \label{alg:switch:mix}
   \STATE $i \gets \lceil \mathit{rnd} \cdot n\rceil$,
          $j \gets \lceil \mathit{rnd} \cdot n\rceil$
   \IF {$(\sigma_{i} \ne j) \land (\sigma_{j} \ne i)$}
      \STATE swap $\sigma_{i} \leftrightarrow \sigma_{j}$
   \ENDIF
\ENDFOR
\ENSURE For sufficiently large $\mathit{mix}$, $\sigma_{1}\cdots\sigma_{n}$ is a ``sufficiently random'' de\-range\-ment from $D_{n}$
\end{algorithmic}
\end{algorithm}

Algorithm~\ref{alg:switch} ensures that for a sufficiently large constant $\mathit{mix}$ it generates a ``sufficiently random'' de\-range\-ment from $D_{n}$. Slurring over the philosophical questions related with the concept of randomness, in this paper ``sufficiently random'' means with cycle lengths distributed according to the exact probabilities given by Eq.~(\ref{eq:prob}). We make this statement about Algorithm~\ref{alg:switch} more precise in Section~\ref{sec:mix}. Clearly, the correct distribution of cycle lengths is a necessary but not sufficient condition for uniformity ($\PP(\sigma \in D_{n}) = 1/d_{n}$) over $D_{n}$, and we do not claim uniformity for Algorithm~\ref{alg:switch} here or elsewhere in this paper. This point is elaborated further in Remark~\ref{rmk:nlnn} and Section~\ref{sec:uniform}.

We run Algorithm~\ref{alg:switch} for $n=64$ and different values of $\mathit{mix} \geq n$ and collect data. Simulations were performed on Intel Xeon E5-1650~v3 processors running \texttt{-O3} compiler-optimized C code (GCC v.~$7.4.0$) over Linux kernel $4.15.14$ at $3.50$\,GHz, while the numbers (\ref{eq:dnk}) were calculated on the software package Mathematica~11.3 \citep{wolfram}. We draw our pseudorandom numbers from Vigna's superb \texttt{xoshiro256+} generator \citep{xoshiro}. Our results appear in Table~\ref{tab:rnd}. We see from that table that with $\mathit{mix}=n$ random restricted transpositions there is a slight excess of probability mass in the lower $k$-cycle sets with $k=1, 2$, and $3$. Trying to scramble the initial $n$-derrangement by $2n$ restricted transpositions performs better. The difference between attempting $2n$ and ${n}\log{n}$ random restricted transpositions is much less pronounced. Figures for de\-range\-ments of higher cycle number fluctuate more due to the finite size of the sample. The data suggest that Algorithm~\ref{alg:switch} can generate a random $n$-de\-range\-ment uniformly distributed on ${D}_{n}$ with $2n$ random restricted transpositions, employing $4n$ pseudorandom numbers in the process. This is further discussed in Section~\ref{sec:mix}.
\begin{remark}
\label{rmk:nlnn}
It is a classic result that $O({n}\log{n})$ transpositions are needed before an unrestricted shuffle by transpositions becomes ``sufficiently random'' \citep{aldous,shahshahani}. A similar analysis for random transpositions over de\-range\-ments is complicated by the fact that de\-range\-ments do not form a group. Recently, the analysis of the spectral gap of the Markov transition kernel of the process provided the upper bound $\mathit{mix} < Cn+a{n}\log{n^2}$, with $a>0$ and $C \geq 0$ a decreasing function of $n$ \citep{aaron}. This bound results from involved estimations and approximations and may not be very accurate. Related results for the mixing time of the random transposition walk over permutations with one-sided restrictions $\sigma_{i} \geq b_{i}$ for given $n \geq b_{n} \geq \cdots \geq b_{1} \geq 1$---a pattern known as a Ferrer's board in the combinatorics literature---appear in \citep{olena,hanlon}. Recently the case $i-t \leq \sigma_{i} \leq i+1$, $t \geq 1$, has been treated by \citet{chung}, although they do not explore the mixing times of the associated Markov chains.
\end{remark}

\begin{table}[t]
\caption{\label{tab:rnd}Proportion of $n$-de\-range\-ments in ${D}_{n}^{(k)}$ measured in $10^{10}$ samples generated by Algorithms~\ref{alg:switch} and \ref{alg:sis} for $n=64$. The notation $x_{-a}$ reads $x \times 10^{-a}$. Data for Algorithm~\ref{alg:sis} are based on a run with a ratio of completed/attempted de\-range\-ments of $0.985472$.} 
\centering
\begin{tabular}{ccccccccc}
\hline
Cycles & & \multicolumn{3}{c}{Algorithm~\ref{alg:switch} ($\mathit{mix}$)} & & \multicolumn{1}{c}{Algorithm~\ref{alg:sis}} & & \multicolumn{1}{c}{Exact} \\
\cline{1-1} \cline{3-5} \cline{7-7} \cline{9-9}
$k$ & & $n$ & $2n$ & ${n}\log{n}$ & & --- & & Eqs.~(\ref{eq:rencontre})--(\ref{eq:dnk}) \\
\cline{1-1} \cline{3-5} \cline{7-7} \cline{9-9}
 $1$ & {} & $0.042\,933$ & $0.042\,479$ & $0.042\,473$ & {} & $0.042\,475$ & {} & $0.042\,473$ \\
 $2$ & {} & $0.158\,395$ & $0.157\,691$ & $0.157\,679$ & {} & $0.157\,684$ & {} & $0.157\,677$ \\
 $3$ & {} & $0.260\,129$ & $0.258\,787$ & $0.258\,765$ & {} & $0.258\,788$ & {} & $0.258\,772$ \\
 $4$ & {} & $0.252\,739$ & $0.253\,304$ & $0.253\,305$ & {} & $0.253\,306$ & {} & $0.253\,301$ \\
 $5$ & {} & $0.167\,189$ & $0.167\,621$ & $0.167\,639$ & {} & $0.167\,622$ & {} & $0.167\,635$ \\
 $6$ & {} & $0.079\,498$ & $0.080\,390$ & $0.080\,402$ & {} & $0.080\,389$ & {} & $0.080\,400$ \\
 $7$ & {} & $0.028\,825$ & $0.029\,192$ & $0.029\,195$ & {} & $0.029\,196$ & {} & $0.029\,200$ \\
 $8$ & {} & $0.008\,087$ & $0.008\,269$ & $0.008\,274$ & {} & $0.008\,272$ & {} & $0.008\,274$ \\
 $9$ & {} & $0.001\,821$ & $0.001\,868$ & $0.001\,869$ & {} & $0.001\,868$ & {} & $0.001\,869$ \\
$10$ & {} & $3.292_{-4}$ & $3.416_{-4}$ & $3.418_{-4}$ & {} & $3.412_{-4}$ & {} & $3.417_{-4}$ \\
$11$ & {} & $4.914_{-5}$ & $5.109_{-5}$ & $5.120_{-5}$ & {} & $5.103_{-5}$ & {} & $5.116_{-5}$ \\
$12$ & {} & $5.997_{-6}$ & $6.322_{-6}$ & $6.301_{-6}$ & {} & $6.354_{-6}$ & {} & $6.326_{-6}$ \\
$13$ & {} & $6.215_{-7}$ & $6.493_{-7}$ & $6.301_{-7}$ & {} & $6.507_{-7}$ & {} & $6.499_{-7}$ \\
$14$ & {} & $4.83_{-8}$ & $5.40_{-8}$ & $5.57_{-8}$ & {} & $5.44_{-8}$ & {} & $5.569_{-8}$ \\
$15$ & {} & $4.6_{-9}$ & $3.1_{-9}$ & $3.0_{-9}$ & {} & $4.1_{-9}$ & {} & $3.989_{-9}$ \\
$16$ & {} & $4_{-10}$ & $1_{-10}$ & $3_{-10}$ & {} & $1_{-10}$ & {} & $2.390_{-10}$ \\
\hline
\end{tabular}
\end{table}


\subsection{\label{sec:match}The perfect matching connection}

In Sec~\ref{sec:walk} we remarked that if one seeds Algorithm~\ref{alg:switch} with an initial fixed-point-free involution, i.\,e., a de\-range\-ment with all cycle lengths equal to $2$, then all subsequent de\-range\-ments generated by the algorithm will also be fixed-point-free involutions. Such de\-range\-ments are in $1$--$1$ correspondence with perfect matchings on a complete graph, since any unoriented edge $\sigma_{i}\sigma_{j}$ can occur. A perfect matching on a graph is a set of disjoint edges of the graph containing all its vertices. The connection between permutations with retricted positions and perfect matchings is well known \citep{brualdi,lovasz} and has been explored recently in the context of random walks on trees and applications, including Monte Carlo estimation of hard enumeration problems \citep{chung,phylos,matchings,kolesnik,dyer,muller}.

Cauchy's formula (\ref{eq:cycles}) gives the number of perfect matchings on a complete graph of even number $n$ of vertices as the number of derangements with $n/2$ cycles of length $2$,
\begin{equation}
\label{eq:match}
k_{n}(0,n/2,0,\dots,0) = \frac{n!}{2^{n/2}(n/2)!} \sim \sqrt{2(n/e)^{n}},
\end{equation}
where the asymptotics follows from Stirling's approximation $n! \simeq \sqrt{2\pi n}\,(n/e)^{n}$. The number (\ref{eq:match}) can also be understood as the number of partitions of a set of even size $n$ into $n/2$ unordered parts of size $2$ each---which is just another definition of a perfect matching. We see that the probability that a random derangement is a perfect matching is very small,
\begin{equation}
\label{eq:pmatch}
\PP(\sigma \in D_{n}^{(n/2)}) = \frac{k(0,n/2,0,\dots,0)}{d_{n}} \simeq \frac{e}{\sqrt{\pi n}}\sqrt{(e/n)^{n}}.
\end{equation}
For example, for $n=10$ equation (\ref{eq:pmatch}) gives a $1$ in $1389$ chance that a random derangement is a perfect matching. If one employs a standard algorithm to generate random permutations, the chance that it outputs a random perfect matching decreases to $1$ in $3777$. With a simple tweak, though, Algorithm~\ref{alg:switch} can generate random perfect matchings on the complete graph at will. Although this is not a particularly difficult computational problem, having a simple and efficient algorithm to generate such random perfect matchings might be useful.


\section{\label{sec:sis}Sequential importance sampling of de\-range\-ments}

\subsection{\label{sec:alg}The SIS algorithm}

Sequential importance sampling (SIS) is an importance sampling scheme with the sampling weights built up sequentially. The idea is particularly suited to sample composite objects $X=X_{1} \cdots X_{n}$ from a complicated sample space $\cal{X}$ for which the high-dimensional volume $\abs{\cal{X}}$, from which the uniform distribution $\PP(X) = \abs{\cal{X}}^{-1}$ follows, may not be easily calculable. However, since we can always write
\begin{equation}
\label{eq:ppp}
\PP(X_{1} \cdots X_{n}) = \PP(X_{1}) \PP(X_{2} \mid X_{1}) \cdots \PP(X_{n} \mid X_{1} \cdots X_{n-1}),
\end{equation}
we can think of ``telescoping'' the sampling of $X$ by first sampling $X_{1}$, then use the updated information brought by the knowledge of $X_{1}$ to sample $X_{2}$ and so on. In Monte Carlo simulations, the right-hand side of (\ref{eq:ppp}) actually becomes $\PP_{1}(X_{1}) \PP_{2}(X_{2} \mid X_{1}) \cdots$ $\PP_{n}(X_{n} \mid X_{1} \cdots X_{n-1})$, with the distributions ${\PP}_{i}(\,\cdot\,)$ estimated or inferred incrementally based on approximate weighting functions for the partial objects $X_{1} \cdots X_{i-1}$. Expositions of the SIS framework of interest to what follows appear in \citet{cdhl,pd-rlg-sph}.

Algorithm~\ref{alg:sis} describes a SIS algorithm to generate random de\-range\-ments inspired by the analogous problem of sampling contingency tables with restrictions \citep{cdhl,pd-rlg-sph} as well as by the problem of estimating the permanent of a matrix \citep{beichl,cdhl,kuznetsov,rasmussen}. Our presentation of Algorithm~\ref{alg:sis} is not the most efficient for implementation; the auxiliary sets $J_{i}$, for instance, are not actually needed and were included only to facilitate the analysis of the algorithm, and the $n$ tests in line~\ref{alg:sis:if} can be reduced to a single test in the last pass, since all $J_{i} \ne \varnothing$ except perhaps $J_{n}$.

The distribution of cycle lengths in $10^{10}$ de\-range\-ments generated by Algorithm~\ref{alg:sis} is presented in Table~\ref{tab:rnd}. We see excellent agreement between the data and the expected values.

\renewcommand{\thealgorithm}{S}
\begin{algorithm}[t]
\caption{Random de\-range\-ments by sequential importance sampling}
\label{alg:sis}
\algsetup{indent=1.5em,linenosize=\small}
\begin{algorithmic}[1]
\STATE $J \gets [\,n\,]$
\FOR{$i=1$ \TO $n$} \label{alg:sis:for}
   \STATE {$J_{i} \gets J \setminus \{i\}$} \label{alg:sis:ji}
   \IF {$J_{i} \ne \varnothing$} \label{alg:sis:if}
      \STATE choose $j_{i} \in J_{i}$ uniformly at random \label{alg:sis:uar}
      \STATE $\sigma_{i} \gets j_{i}$ \label{alg:sis:sig}
      \STATE $J \gets J \setminus \{j_{i}\}$ \label{alg:sis:update}
   \ELSE
      \STATE fail
   \ENDIF
\ENDFOR \label{alg:sis:endfor}
\ENSURE If completed, $\sigma_{1}\cdots\sigma_{n}$ is a ``sufficiently random'' de\-range\-ment from $D_{n}$
\end{algorithmic}
\end{algorithm}

\subsection{\label{sec:fail}Failure probability of the SIS algorithm}

In the $i\,$th pass of the loop in Algorithm~\ref{alg:sis}, $\sigma_{i}$ can pick (lines~\ref{alg:sis:uar}--\ref{alg:sis:sig}) one of either $n-i$ or $n-i+1$ labels, depending on whether label $i$ has already been picked. This guarantees the construction of the $n$-de\-range\-ment up to the $(n-1)$st label $\sigma_{n-1}$. The $n$-deran\-ge\-ment is completed only if the last remaining label is different from $n$, such that $\sigma_{n}$ does not pick $n$. The probability that Algorithm~\ref{alg:sis} fails is thus given by
\begin{equation}
\label{eq:delta}
\PP(\sigma_{n}=n \mid \sigma_{1} \cdots \sigma_{n-1}) = \PP(\sigma_{1} \ne n)\, \PP(\sigma_{2} \ne n \mid \sigma_{1}) \cdots \PP(\sigma_{n-1} \ne n \mid \sigma_{1} \cdots \sigma_{n-2}).
\end{equation}
Now, according to Algorithm~\ref{alg:sis}, line~\ref{alg:sis:uar}, we have
\begin{equation}
\label{eq:ps}
\PP(\sigma_{i} \ne n \mid \sigma_{1} \cdots \sigma_{i-1}) =
1-\PP(\sigma_{i}=n \mid \sigma_{1} \cdots \sigma_{i-1}) =
1-\frac{1}{\EE(\abs{J_{i}(\sigma_{1} \cdots \sigma_{i-1})})},
\end{equation}
where $\EE(\abs{J_{i}(\sigma_{1} \cdots \sigma_{i-1})})$ is the expected size of the set $J_{i}$ in the $i\,$th pass of the loop in Algorithm~\ref{alg:sis}. The failure probability then becomes
\begin{equation}
\label{eq:fail}
\PP(\sigma_{n}=n \mid \sigma_{1} \cdots \sigma_{n-1}) = \prod_{i=1}^{n-1}\Big(1-\frac{1}{E_{i}}\Big),
\end{equation}
where $E_{i}$ stands for $\EE(\abs{J_{i}(\sigma_{1} \cdots \sigma_{i-1})})$.


The computation of (\ref{eq:fail}) is a cumbersome business and we will not pursued it here. The following theorem establishes an upper bound on the failure probability of Algorithm~\ref{alg:sis}.
\begin{theorem}
Algorithm~\ref{alg:sis} fails with probability $O(1/n)$.
\end{theorem}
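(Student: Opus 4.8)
The plan is to bound the product in~(\ref{eq:fail}) from above by controlling the expected sizes $E_i = \EE(\abs{J_i(\sigma_1\cdots\sigma_{i-1})})$ from below. Recall from the discussion preceding~(\ref{eq:fail}) that at the start of the $i$th pass the working set $J$ has exactly $n-i+1$ elements (one element is removed per completed pass), and $J_i = J\setminus\{i\}$ has either $n-i+1$ or $n-i$ elements according to whether $i\in J$. Hence deterministically $\abs{J_i}\ge n-i$ for $i\le n-1$, which already gives $E_i\ge n-i$ and therefore
\begin{equation}
\label{eq:crude}
\PP(\sigma_n=n\mid\sigma_1\cdots\sigma_{n-1}) \le \prod_{i=1}^{n-1}\Big(1-\frac{1}{n-i}\Big) = \prod_{m=1}^{n-1}\frac{m-1}{m} = 0,
\end{equation}
which is too crude (it collapses because the $i=n-1$ factor is $1-\frac{1}{1}=0$, reflecting the forced last step rather than a genuine bound). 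So the real work is to get a good lower bound on $E_i$ for the handful of passes where $n-i$ is small — equivalently, to show that with high probability label $i$ has \emph{not} yet been used by pass $i$, so that in fact $\abs{J_i}=n-i+1$ rather than $n-i$ for most late passes.

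The key quantity is therefore $p_i := \PP(i\in J \text{ at the start of pass } i) = \PP(\sigma_1\ne i,\dots,\sigma_{i-1}\ne i)$, since $E_i = (n-i+1) - (1-p_i) = (n-i) + p_i$. First I would derive a recursion or a direct estimate for $p_i$: by the same telescoping as in~(\ref{eq:delta})--(\ref{eq:ps}), $\PP(\sigma_k\ne i\mid\sigma_1\cdots\sigma_{k-1}) = 1 - 1/\EE(\abs{J_k})$ whenever $i$ is still available, and $\EE(\abs{J_k})$ is close to $n-k$; by symmetry among the labels $\{k,k+1,\dots,n\}$ that have "not had their turn," $p_i$ should behave like $\prod_{k=1}^{i-1}(1 - \frac{1}{n-k+O(1)})$, which is $\Theta((n-i)/n)$ for $i$ up to $n-1$. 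In particular $p_{n-1}=\Theta(1/n)$, and more usefully $1 - 1/E_i = 1 - 1/(n-i+p_i)$. The plan is then to split the product~(\ref{eq:fail}) at some threshold, say $i\le n-2$ versus $i=n-1$: the tail factors $i\le n - c$ for a constant $c$ contribute a bounded-away-from-$1$-per-factor product that telescopes to something $O(1/n)$ by comparison with~(\ref{eq:crude})-type estimates wherever $E_i\approx n-i$ is large, while the last few factors $i=n-c+1,\dots,n-1$ must be handled using $E_i = (n-i)+p_i$ with the lower bound $p_i = \Omega((n-i)/n + 1/n)$ so that $1-1/E_i$ stays bounded away from its degenerate value; multiplying the resulting $O(1)$ correction from the last few terms by the $O(1/n)$ from the bulk gives the claimed $O(1/n)$.

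Concretely, the cleanest route is probably: (i) prove $E_i \ge n - i + p_i$ with $p_i \ge c_0 (n-i+1)/n$ for an absolute constant $c_0>0$, by an easy induction on $i$ using $p_{i+1} = p_i\,(1 - 1/E_i) + (\text{terms for } i \text{ being reinserted — none, once used})$ together with $E_i \le n-i+1$; (ii) substitute into~(\ref{eq:fail}) to get
\begin{equation}
\PP(\sigma_n=n\mid\sigma_1\cdots\sigma_{n-1}) \le \prod_{i=1}^{n-1}\frac{n-i-1+p_i}{n-i+p_i},
\end{equation}
(iii) observe this product telescopes-by-domination: writing $q_i = n-i+p_i$, consecutive factors nearly cancel because $q_{i+1} = n-i-1+p_{i+1} \le n-i-1+p_i = q_i - 1 + (p_{i+1}-p_i)$ and $p_{i+1}\le p_i$, so $\prod \frac{q_i-1}{q_i} \le \frac{q_1 - (n-1)}{q_1}\cdot(\text{bounded factors}) = \frac{p_{n-1}}{n-1+p_1}\cdot O(1) = O(1/n)$, the last step using $p_{n-1}=O(1/n)$ from~(\ref{eq:ps}) directly (since $E_{n-1}=1+p_{n-1}$ and $p_{n-1}<1$). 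The main obstacle is step~(i): pinning down a clean, rigorous lower bound on $p_i$ (or equivalently on $E_i - (n-i)$) that survives the conditioning on $\sigma_1\cdots\sigma_{i-1}$ — the "expected size'' $E_i$ hides a nontrivial dependence on the history, and one must argue, perhaps via a symmetry/exchangeability argument over the still-unvisited labels or via a coupling with the unrestricted construction, that the probability label $i$ remains available does not degrade faster than $\Omega((n-i+1)/n)$. Everything after that is the routine telescoping sketched above.
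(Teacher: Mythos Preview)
Your argument has the monotonicity backwards at the very first step, and this error propagates through the entire proposal. The function $x\mapsto 1-1/x$ is \emph{increasing}; hence from a lower bound $E_i\ge n-i$ you only get $1-1/E_i \ge 1-1/(n-i)$, i.e.\ a lower bound on each factor and on the product in~(\ref{eq:fail}), not an upper bound. Your display~(\ref{eq:crude}) therefore has the inequality sign the wrong way round: what you have actually proved is $\PP(\sigma_n=n\mid\sigma_1\cdots\sigma_{n-1})\ge 0$, which is vacuous, not ``too crude.'' All of the subsequent effort (sharpening the lower bound on $E_i$ via $p_i$, the induction in step~(i), the telescoping in step~(iii)) is aimed at improving a lower bound on the $E_i$, and so continues to push the wrong side of the inequality.

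The paper's proof uses the opposite, trivial observation: since $\abs{J_i}\le n-i+1$ deterministically (this is just $\abs{J}=n-i+1$ together with $J_i\subseteq J$), one has $1-1/E_i \le 1-1/(n-i+1)$, and the product telescopes exactly:
\[
\prod_{i=1}^{n-1}\Big(1-\frac{1}{n-i+1}\Big)=\prod_{m=2}^{n}\frac{m-1}{m}=\frac{1}{n}.
\]
No analysis of $p_i$, no splitting of the product, no induction is needed. The moral is that the deterministic \emph{upper} bound $\abs{J_i}\le n-i+1$ suffices; the lower bound $\abs{J_i}\ge n-i$ is what gives the degenerate factor you noticed, but it is simply the wrong inequality for this purpose.
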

\begin{proof}
In the $i\,$th pass of the loop in Algorithm~\ref{alg:sis} we have
\begin{equation}
\label{eq:indic}
\abs{J_{i}(\sigma_{1} \cdots \sigma_{i-1})} = n-i+\sum_{j=1}^{i-1}\one\{(\sigma_{j}=i \mid \sigma_{1}, \dots, \sigma_{j-1})\},
\end{equation}
where the symbol $\one\{A\}$ stands for the indicator function that equals $1$ if $A$ occurs and $0$ if $A$ does not occur. We thus have that $\abs{J_{i}(\sigma_{1} \cdots \sigma_{i-1})} = n-i$ or $n-i+1$, such that the expectation $E_{i} = \EE(\abs{J_{i}(\sigma_{1} \cdots \sigma_{i-1})})$ obeys
\begin{equation}
1-\frac{1}{n-i} < 1-\frac{1}{E_{i}} < 1-\frac{1}{n-i+1}
\end{equation}
and it immediately follows that
\begin{equation}
\label{eq:bound}
\PP(\sigma_{n}=n \mid \sigma_{1} \cdots \sigma_{n-1}) = \prod_{i=1}^{n-1}\Big(1-\frac{1}{E_{i}}\Big) < \prod_{i=1}^{n-1}\Big(1-\frac{1}{n-i+1}\Big) = \frac{1}{n}.
\end{equation} \qed
\end{proof}

We can obtain a slightly better bound for $\PP(\sigma_{n}=n \mid \sigma_{1} \cdots \sigma_{n-1})$. The difficulty in the calculation of $\EE(\abs{J_{i}(\sigma_{1} \cdots \sigma_{i-1})})$ resides in the calculation of $\EE(\one\{(\sigma_{j}=i \mid \sigma_{1}, \dots, \sigma_{j-1})\})$. We can approximate this calculation by ignoring the conditioning of the event $(\sigma_{j}=i)$ on the event $(\sigma_{1} \cdots \sigma_{j-1})$, i.\,e., by ignoring correlations between the values assumed by the $\sigma_{j}$ along a ``path'' in the algorithm. The approximation is clearly better in the beginning of the construction of $\sigma$, when $j$ is small, than later. We get
\begin{equation}
\label{eq:indep}
\begin{split}
\EE(\abs{J_{i}(\sigma_{1} \cdots \sigma_{i-1})}) &= n-i + \sum_{j=1}^{i-1}\EE\big(\one\{(\sigma_{j}=i \mid \sigma_{1}, \dots, \sigma_{j-1})\}\big) \\ & \approx n-i+\sum_{j=1}^{i-1}\EE\big(\one\{\sigma_{j}=i\}\big) = n-i+\frac{i-1}{n-1}.
\end{split}
\end{equation}
This approximate $E_{i}$ is greater than the true $E_{i}$, because conditioning $J_{i}$ on $(\sigma_{1} \cdots \sigma_{i-1})$ can only restrict the set of indices available to $\sigma_{i}$, not enlarge it. The approximate value of $1-1/E_{i}$ is thus greater than its true value, and we can bound the failure probability (\ref{eq:fail}) by
\begin{equation}
\label{eq:approx}
\PP(\sigma_{n} = n \mid \sigma_{1} \cdots \sigma_{n-1}) <
\prod_{i=1}^{n-1}\bigg( 1-\frac{1}{n-i+\frac{i-1}{n-1}}\bigg) =
\frac{1}{n-1}\prod_{i=1}^{n-1}\bigg[1+\frac{1}{(n-2)(n-i)}\bigg]^{-1}.
\end{equation}

The measured failure rate for the SIS data in Table~\ref{tab:rnd} is $1-0.985472=0.014528$, not far from $1/64 = 0.015625$. A sample of $10^{4}$ runs of Algorithm~\ref{alg:sis} of $10^{6}$ de\-range\-ments each with $n=64$ gives an average failure rate of $0.01453(12)$ with a sample minimum of 0.014130 and maximum of 0.014991, where the digits within parentheses indicate the uncertainty at one standard deviation in the corresponding last digits of the datum. Figure~\ref{fig:eff} depicts Monte Carlo data for the failure probability (\ref{eq:fail}) against the upper bounds $1/n$ and (\ref{eq:approx}). Each data point was obtained as an average over $10^{4}$ runs of Algorithm~\ref{alg:sis} of $10^{6}$ de\-range\-ments each except for $n=512$, for which the runs are of $2 \times 10^{5}$ de\-range\-ments each.



\begin{figure}[ht]
\centering
\includegraphics[viewport=0 10 540 430,scale=0.32,clip]{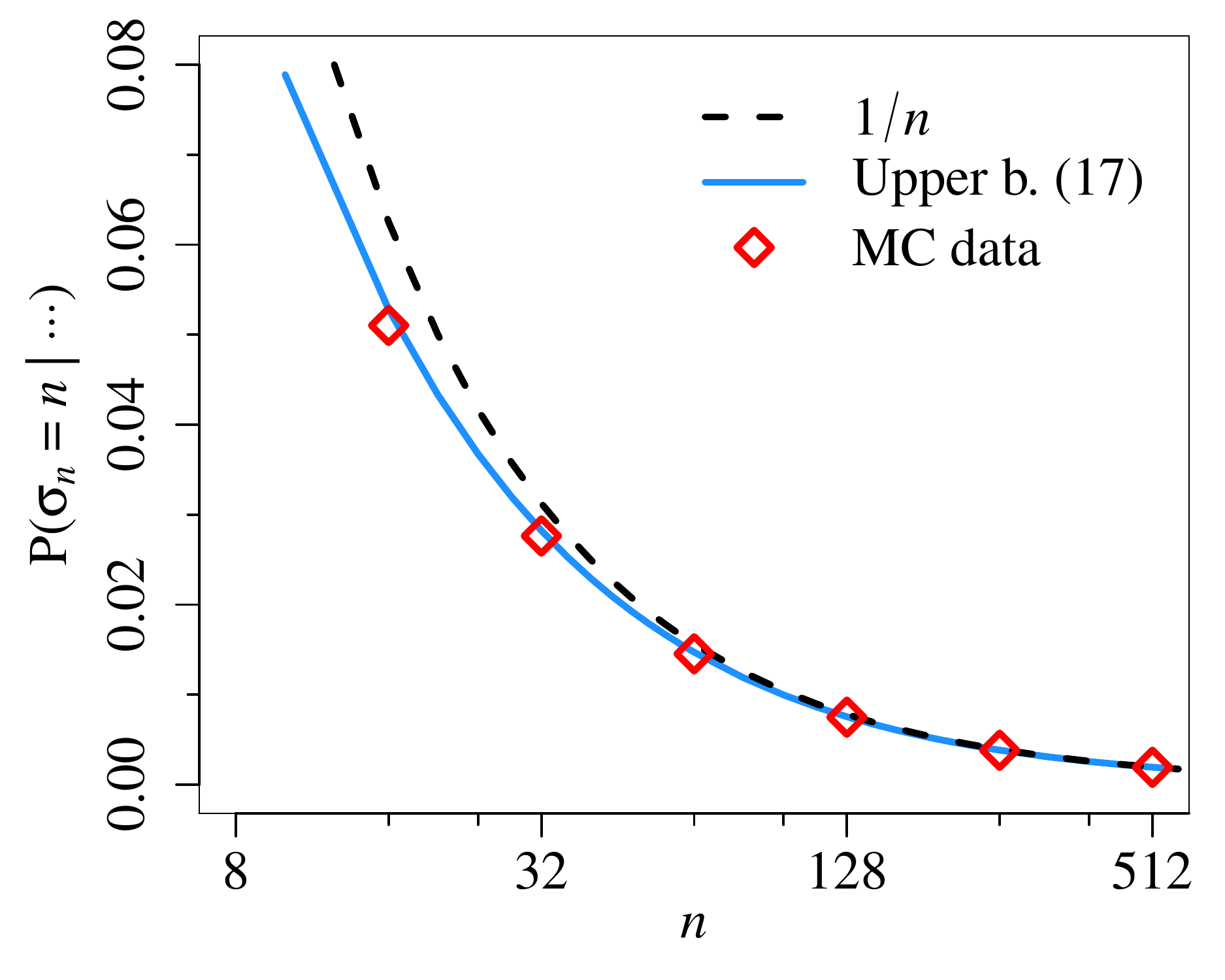}
\caption{\label{fig:eff}Measured failure rate for Algorithm~\ref{alg:sis} against $1/n$ and the upper bound (\ref{eq:approx}). Error bars in the data are much smaller than the symbols shown.}
\end{figure}


\subsection{\label{sec:uniform}Uniformity (or the lack thereof) of the SIS algorithm}

In the SIS approach, the ensuing sampling probabilities may deviate considerably from the uniform distribution. As it happens, Algorithm~\ref{alg:sis} does not generate each derangement in $D_{n}$ with uniform probability $1/d_{n}$. This can be seen by a simple pencil-simulation of the algorithm for some small $n$, say $n=3$. In this case $D_{3}=\{231,312\}$. To build $312$, we must first pick $3$ with probability $1/2$ then choose $1$ and $2$ sequentially, thus generating the derangement $312$ with probability $1/2$. To build $231$, otherwise, we must first pick $2$ with probability $1/2$, then $3$ with probability $1/2$ and then $1$ is forced, such that $231$ occurs with probability $1/4$. If we first pick $2$ and then $1$ the algorithm fails with probability $1/4$. We see that, by the rules of Algorithm~\ref{alg:sis}, $\PP(\sigma=312) \neq \PP(\sigma=231)$.

To verify whether the probability imbalance persists or smoothes out for larger $n$, we generate $100\,d_{n}$ derangements by Algorithm~\ref{alg:sis} for $n=8$ ($d_{8}=14833$) and $n=11$ ($d_{11}=14\,684\,570$) and bin the data. It is hard to run statistical tests involving all derangements for $n>11$ because either the sizes of the data files become humongous (hundreds of gigabytes if we insist in $100\,d_{n}$ samples) or the processing time becomes prohibitive (e.\,g., binning the derangements on the run involves searching). We found that Algorithm~\ref{alg:sis} indeed generated all derangements in $D_{8}$ and $D_{11}$ many times each in the runs. Figure~\ref{fig:freqDn}, however, definitely does not depict a distribution of occurrences peaked sharply about $100$ (the bins are of size $5$) which would represent uniform distribution. We cannot even argue that the distributions are becoming sharper with increasing $n$, since the standard deviation of the data are virtually the same in both cases: $\text{sd}_{8} \simeq 33.5$ versus $\text{sd}_{11} \simeq 32.4$.

Yet the data in Table~\ref{tab:rnd} clearly suggest that Algorithm~\ref{alg:sis} does sample $D_{n}$ according to the expected distribution of cycle lengths for $n=64$; the same behavior was also observed for a couple of other $n \geq 20$. A possible explanation is that $D_{n}$ is so large already for moderate values of $n$ (for instance, $d_{20} = 8.950 \times 10^{17}$), that any relatively ``small'' sample (which can actually be extremely large in absolute numbers) obtained by Algorithm~\ref{alg:sis} will most likely not contain repeated derangements. We verified this claim empirically: in five separate samples of $10^8$ derangements of $20$ labels each, not a single derangement occured twice either within a sample or between them. For practical purposes, then, Algorithm~\ref{alg:sis} samples $D_{n}$ ``uniformly.''

We could neither prove the uniformity nor the non-uniformity of Algorithm~\ref{alg:sis} rigorously. An attempt based on techniques borrowed from \cite{beichl,cdhl,kuznetsov,rasmussen} proved flawed. For one-sided restricted permutations of the type $\sigma_{i} \geq b_{i}$ for given $n \geq b_{n} \geq \cdots \geq b_{1} \geq 1$ (cf.~Remark~\ref{rmk:nlnn}), \citet{pd-rlg-sph} prove (Lemma~3.2) that a simple SIS algorithm samples all possible permutations uniformly and, moreover, that the algorithm never fails because of the particular form of the restrictions. A recent account on the SIS approach to sample one-sided restricted permutations is given by \citet{chung}. Their arguments do not seem to apply to de\-range\-ments, though.

\begin{figure}[t]
\centering
\includegraphics[viewport=10 0 480 490,scale=0.32,clip]{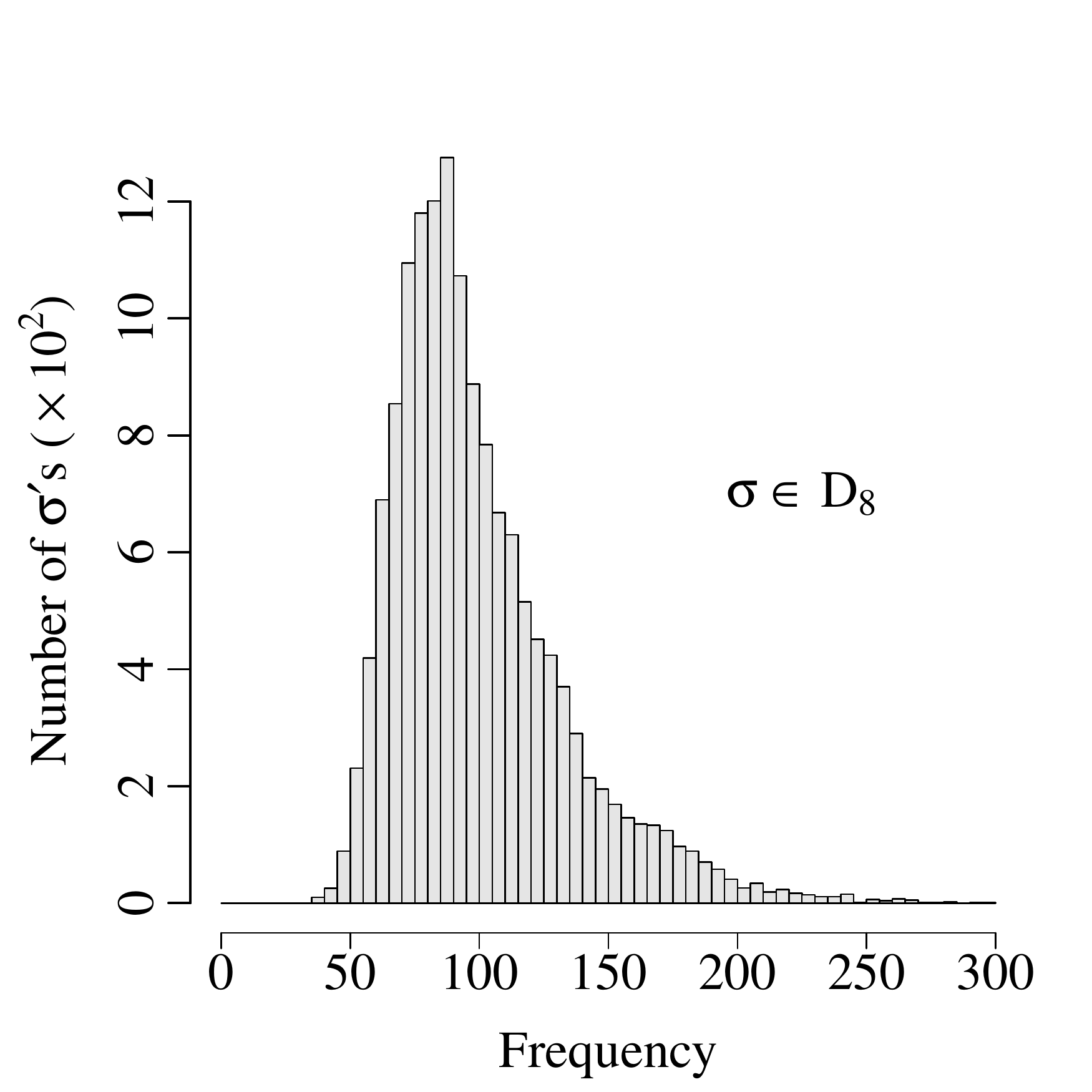}
\hspace{2em}
\includegraphics[viewport=10 0 480 490,scale=0.32,clip]{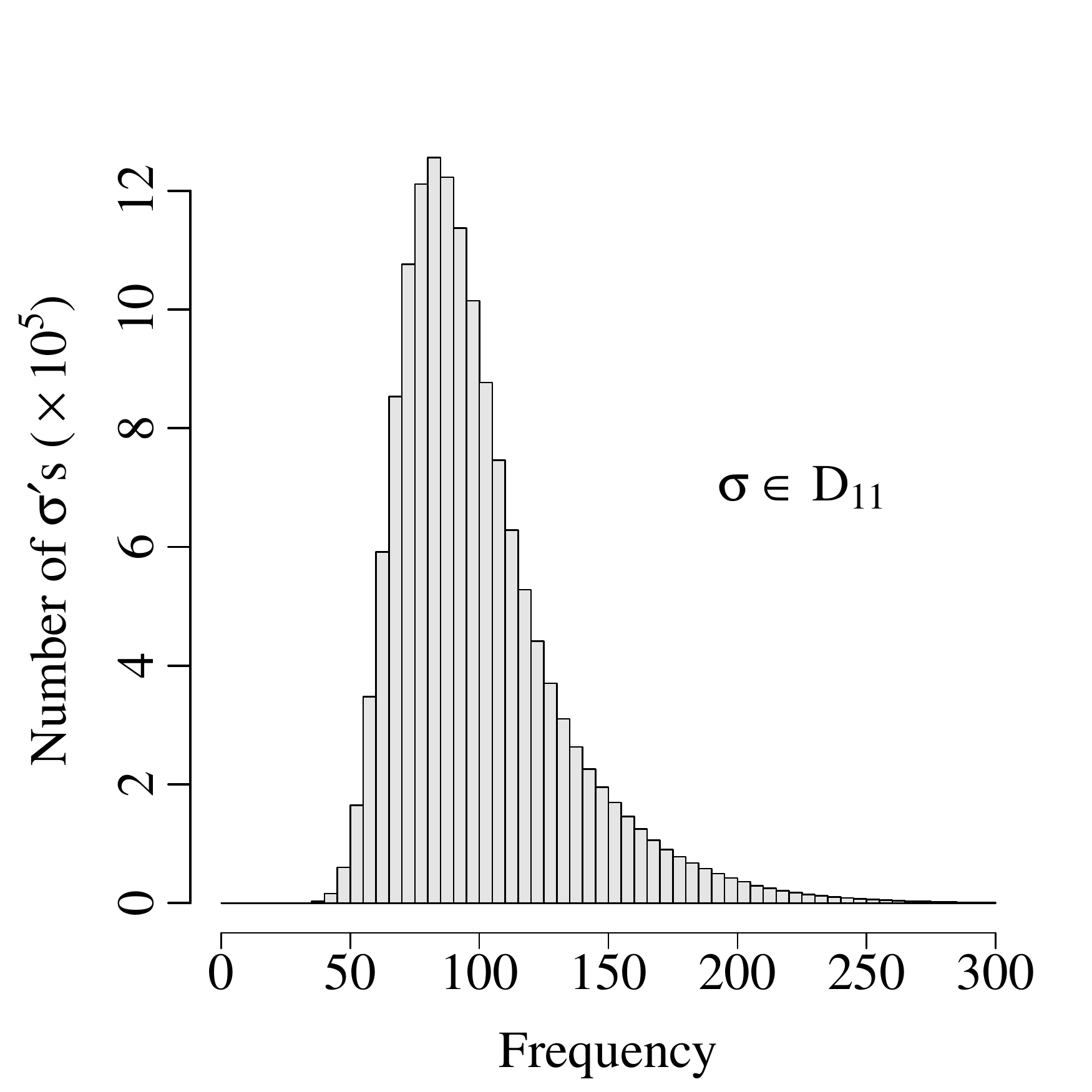}
\caption{\label{fig:freqDn}Number of derangements $\sigma \in D_{n}$ that occur in a sample of size $100d_{n}$ generated by Algorithm~\ref{alg:sis} for $n=8$ and $11$. The bins are of size $5$. Note the different vertical scales, as $d_{11}/d_{8} \simeq 990$. If Algorithm~\ref{alg:sis} sampled $D_{n}$ uniformly, we would expect a sharp peak of height $O(d_{n})$ located at frequency $100$.}
\end{figure}


\section{\label{sec:mix}Mixing time of the restricted transpositions shuffle}

To shed some light on the question of how many random restricted transpositions are necessary to generate random de\-range\-ments uniformly over ${D}_{n}$, we investigate the convergence of Algorithm~\ref{alg:switch} numerically. This can be done by monitoring the evolution of the empirical probabilities along the run of the algorithm towards the exact probabilities given by (\ref{eq:prob}).

Let $\nu$ be the measure that puts mass $\nu(k)={d}_{n}^{(k)}/{{d}_{n}}$ on the set ${D}_{n}^{(k)}$ and $\mu_{t}$ be the empirical measure
\begin{equation}
\label{eq:mu}
\mu_{t}(k) = \frac{1}{t}\sum_{s=1}^{t} \one\{\sigma_{s} \in {D}_{n}^{(k)}\},
\end{equation}
where $\sigma_{s}$ is the de\-range\-ment obtained after attempting $s$ restricted transpositions by Algorithm~\ref{alg:switch} on a given initial de\-range\-ment $\sigma_{0}$. The total variance distance between $\mu_{t}$ and $\nu$ is given by \citep{aldous,persi}
\begin{equation}
\label{eq:tvd}
d_{\mathrm{TV}}(t) = \|\mu_{t}-\nu\|_{\mathrm{TV}} = \frac{1}{2}\sum_{k=1}^{\lfloor n/2 \rfloor}\abs{\mu_{t}(k)-\nu(k)}.
\end{equation}
The right-hand side of (\ref{eq:tvd}) can be seen as the ``histogram distance'' between $\mu_{t}$ and $\nu$ in the $\ell_{1}$ norm. Clearly, $0 \leq d_{\mathrm{TV}}(t) \leq 1$. This distance allows us to define $t_{\mathrm{mix}}(\eps)$ as the time it takes for $\mu_{t}$ to fall within distance $\eps$ of $\nu$,
\begin{equation}
\label{eq:mix}
t_{\mathrm{mix}}(\eps)=\min\{t \geq 0 \colon d_{\mathrm{TV}}(t) < \eps\}.
\end{equation}
It is usual to define \textit{the} mixing time $t_{\mathrm{mix}}$ by setting $\eps=\frac{1}{4}$ or $\eps=\frac{1}{2}e^{-1} \simeq 0.184$, this last figure being reminiscent of the spectral analysis of Markov chains. We set $\eps=\frac{1}{2}e^{-1}$. This choice is motivated by the following pragmatic reasons:
\begin{itemize}
\item[(i)]We want the de\-range\-ments output by Algorithm~\ref{alg:switch} to be as uniformly distributed over $D_{n}$ as possible, so the smaller the $\eps$ the better the assessment of the algorithm and the choice of the constant $\mathit{mix}$;
\item[(ii)]Most of the probability mass is concentrated on a few cycle numbers (see Table~\ref{tab:rnd} and Remark~\ref{rmk:log} below), such that even relatively small differences between $\mu_{t}$ and $\nu$ are likely to induce noticeable biases in the output of Algorithm~\ref{alg:switch};
\item[(iii)]With $\eps=\frac{1}{4}$ we found that $t_{\mathrm{mix}}<n/2$, meaning that not even every possible de\-range\-ment had chance to be generated if the initial de\-range\-ment is cyclic (see Remark~\ref{rmk:split}).
\end{itemize}

\begin{remark}
\label{rmk:normal}
It is well known that the number of $k$-cycles of random $n$-permutations is Poisson distributed with mean $1/k$, such that as $n \nearrow \infty$ the CLT implies that the length of the cycles of random permutations follow a normal ditribution with mean $\log{n}$ and variance $\log{n}$; see, e.\,g., \cite{arratia} and the references there in. \citet{soria} proved that the same holds for permutations with no cycles of length less than a given $\ell>1$ using complex asymptotics of exponential generating functions; \citet{analco} and \citet{iran} provide the analysis for the particular case of de\-range\-ments. Figure~\ref{fig:pdf} displays the exact distribution of $k$-cycles for de\-range\-ments with $n=2^{15}=32768$ together with the normal density $N(\log{n},\sqrt{\mkern1mu\log{n}})$. For $n=32768$ we obtain from equations (\ref{eq:prob})--(\ref{eq:dnk}) that $\langle k \rangle = 9.967\cdots$ and $\sqrt{\langle k^{2} \rangle-\langle k \rangle^{2}} = 2.872\cdots$, while $\log{n} = 10.397\cdots$ and $\sqrt{\mkern1mu\log{n}} = 3.224\cdots$. The distribution of cycle lengths in Figure~\ref{fig:pdf} indeed looks close to a normal $N(\log{n},\sqrt{\mkern1mu\log{n}})$, albeit slightly skewed. We did not go to greater $n$ because Stirling numbers of the first kind are notoriously hard to compute even by computer algebra systems running on modern workstations. Recently, the cycle structure of certain types of restricted permutations (with $\sigma_{i} \geq i-1$) was also shown to be asymptotically normal \citep{ozel}.

\begin{figure}[t]
\centering
\includegraphics[viewport=10 0 480 490,scale=0.32,clip]{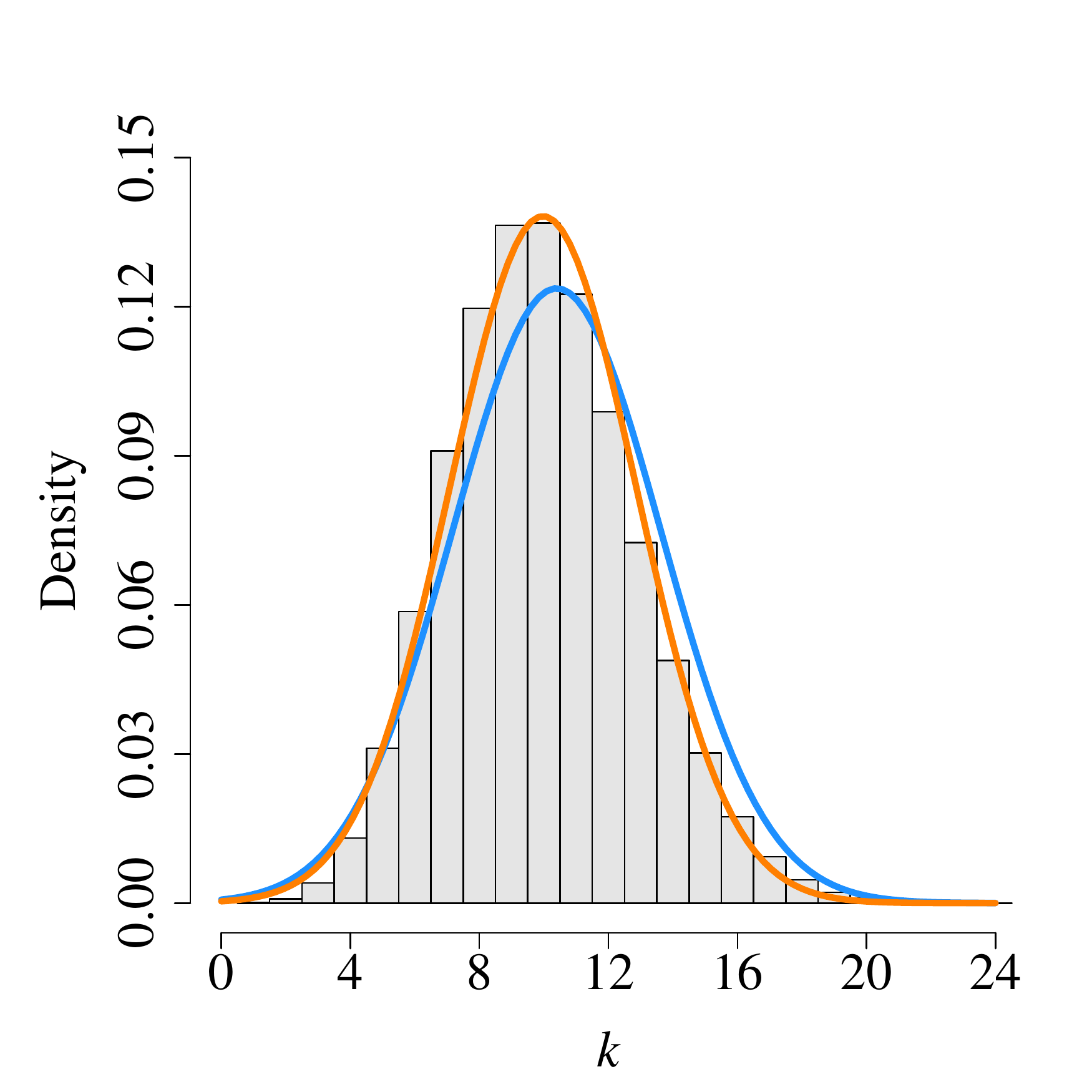}
\caption{\label{fig:pdf}Distribution of cycle lengths of random $n$-de\-range\-ments for $n=2^{15}=32768$ together with the normal densities $N(\log{n},\sqrt{\mkern1mu\log{n}})$ (shorter, in blue) and $N(m,s)$ (taller, in orange) with $m = \langle k \rangle = 9.967\cdots$ and $s = \sqrt{\langle k^{2} \rangle-\langle k \rangle^{2}} = 2.872\cdots$.}
\end{figure}
\label{rmk:log}
\end{remark}

Starting with a cyclic de\-range\-ment, i.\,e., with $\mu_{0}(1)=1$ and all other $\mu_{0}(k)=0$, we run Algorithm~\ref{alg:switch} and collect statistics on $d_{\mathrm{TV}}(t)$. Figure~\ref{fig:dtv} displays the average $\langle d_{\mathrm{TV}}(t) \rangle$ over $10^{6}$ runs for $n=128$. The behavior of $\langle d_{\mathrm{TV}}(t) \rangle$ does not show sign of the cutoff phenomenon---a sharp transition from unmixed state ($d_{\mathrm{TV}}(t_{\mathrm{mix}}-\delta) \approx 1$) to mixed state ($d_{\mathrm{TV}}(t_{\mathrm{mix}}+\delta) \approx 0$) over a small window of time $\delta \ll t_{\mathrm{mix}}$. Table~\ref{tab:mix} lists the average $\langle t_{\mathrm{mix}} \rangle$ obtained over $10^{6}$ samples for larger de\-range\-ments at $\eps = \frac{1}{2}e^{-1}$. An adjustment of the data to the form
\begin{equation}
\label{eq:adj}
t_{\mathrm{mix}} =c{\mkern1mu}n^{a}\log{n^{2}}
\end{equation}
furnishes $a=0.527(2)$ and $c=0.90(1)$. Our data thus suggest that $t_{\mathrm{mix}} \sim O(n^{a}\log{n}^{2})$ with $a \simeq \frac{1}{2}$, roughly an $O(\sqrt{n})$ lower than the upper bound given by \cite{aaron}. It is tempting to conjecture that $a=\frac{1}{2}$ (and, perhaps, that $c=1$) exactly, cf.\ last two lines of Table~\ref{tab:mix}, although our data do not support the case unequivocally.

\begin{figure}[t]
\centering
\includegraphics[viewport=10 10 540 430,scale=0.32,clip]{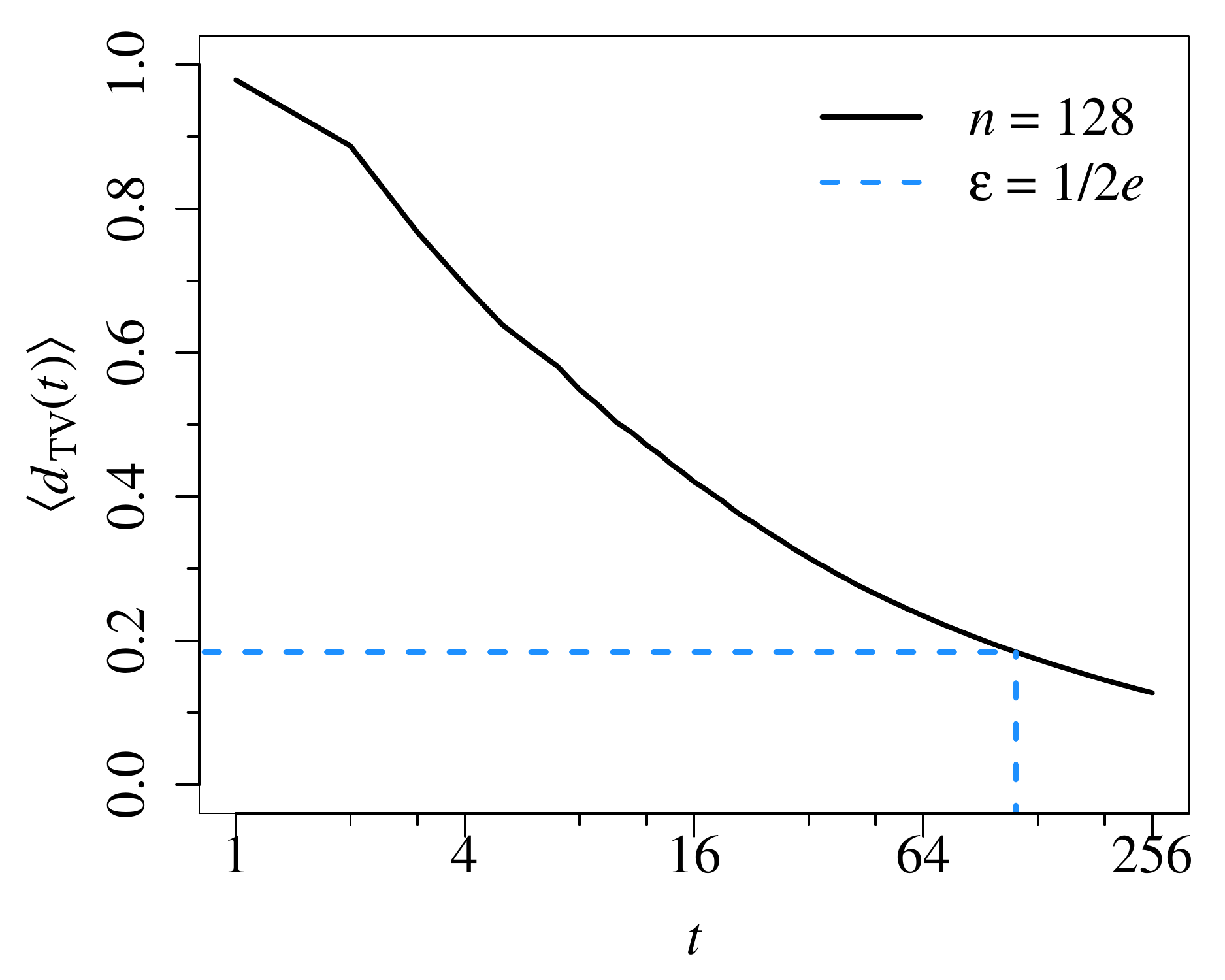}
\caption{\label{fig:dtv}Total variance distance $\langle d_{\mathrm{TV}}(t) \rangle$ (averaged over $10^{6}$ runs) between the empirical measure $\mu_{t}$ (with $\mu_{0}(1)=1$) and the stationary measure $\nu$ of the process defined by Algorithm~\ref{alg:switch} for $n=128$. The dashed line indicates the level $\eps=\frac{1}{2}e^{-1}$.}
\end{figure}

\begin{table}[t]
\caption{\label{tab:mix}Mixing time $t_{\mathrm{mix}}$ evaluated at $\eps=\frac{1}{2}e^{-1}$ obtained from an average trajectory $\langle d_{\mathrm{TV}}(t) \rangle$ over $10^{6}$ samples; see Figure~\ref{fig:dtv}. The second line displays the best guess to $n^{a}\log{n^{b}}$ involving only integer and semi-integer exponents. The last line displays the adjusted $a$ supposing a dependence like in (\ref{eq:adj}) with $c=1$.}
\centering
\begin{tabular}{ccccccccc}
\hline \\[-6pt]
$n$ & $64$ & $128$ & $192$ & $256$ & $320$ & $384$ & $448$ & $512$ \\[1pt]
\hline \\[-6pt]
$\langle t_{\mathrm{mix}} \rangle$ & $67$ & $112$ & $150$ & $184$ & $216$ & $245$ & $274$ & $301$ \\[1pt]
$\sqrt{n}\log{n^{2}}$ & $67$ & $110$ & $146$ & $177$ & $206$ & $233$ & $258$ & $282$ \\[1pt]
$a$ in $n^{a}\log{n^{2}}$ & $0.502$ & $ 0.504$ & $0.505$ & $0.507$ & $0.508$ & $0.508$ & $0.510$ & $0.510$ \\[1pt]
\hline
\end{tabular}
\end{table}


\section{\label{sec:summary}Summary and conclusions}

While simple rejection-sampling generates random de\-range\-ments with an acceptance rate of $\sim e^{-1}$ $\simeq 0.368$, thus being $O(e{\cdot}n)$ (plus the cost of verifying if the permutation generated is a de\-range\-ment, which does not impact the complexity of the algorithm but impacts its runtime), Sattolo's $O(n)$ algorithm only generates cyclic de\-range\-ments, and Mar\-t\'{\i}nez-Panholzer-Prodinger algorithm, with guaranteed uniformity, is $2n+O(\log^{2}n)$, we described two procedures, Algorithms~\ref{alg:switch} and \ref{alg:sis}, that are competitive for the efficient generation of random de\-range\-ments. In Section~\ref{sec:match} we discussed how Algorithm~\ref{alg:switch} can also be used, with $n$ even, to generate only random fixed-point-free involutions. Since fixed-point-free involutions of even $n$ labels can be viewed as perfect matchings on the complete graph, Algorithm~\ref{alg:switch} can become handy in a multitude of situations.

We found, numerically, that $O(n^{a}\log{n}^{2})$ random restricted transpositions with $a \simeq \frac{1}{2}$ suffice to spread an initial $n$-de\-range\-ment over ${D}_{n}$ measured by the distribution of cycle lengths. The fact that $2n > cn^{a}\log{n^{2}}$ for all $n \geq 1$ as long as $a \leq 0.63$ and $c \leq 1$ explains the good statistics displayed by Algorithm~\ref{alg:switch} with $\mathit{mix}=2n$, see Table~\ref{tab:rnd}. Currently, there are few analytical results on the mixing time of the random restricted transposition walk implemented by Algorithm~\ref{alg:switch}; the upper bound $O({n}\log{n^{2}})$ obtained by \citet{aaron} is roughly $O(\sqrt{n})$ above our numerical estimations. \citet{phylos,matchings} obtain a sharp $O({n}\log{n})$ estimate for the mixing time of a ``switch Markov chain'' for perfect matchings. Their chain builds perfect matchings as unordered sets $\{i,j\}$, not as ordered pairs $(i,j)$, as we do. Their numbers, however, are clearly equal because as a $2$-cycle $(ij) \equiv (ji)$. It would be interesting to run Algorithm~\ref{alg:switch} in the ``perfect matchings mode'' to check whether its mixing time display a different behavior.

Algorithm~\ref{alg:switch} employs $2\,\mathit{mix}$ pseudorandom numbers and Algorithm~\ref{alg:sis} employs $O(n)$ pseudorandom numbers to generate an $n$-de\-range\-ment distributed over ${D}_{n}$ with the expected distributions of cycle lengths. In this way, even if we set $\mathit{mix} = c\sqrt{n}\,\log{n}^{2}$ with some $1 < c \sim O(1)$, both algorithms perform better than currently known methods, with comparable runtime performances between them. As we argued in Section~\ref{sec:uniform}, for relatively small samples, which can actually be very large in absolute numbers (several billion derangements, for instance) since $D_{n}$ is such a huge set already for moderate $n$, in practice Algorithm~\ref{alg:sis} samples derangements ``uniformly.''


\section*{Acknowledgments}

The author thanks Aaron Smith (U. Ottawa) for useful correspondence and suggestions improving a previous version of the manuscript, the Laboratoire de Physique Th\'{e}orique et Mod\`{e}les Statistiques -- LPTMS (CNRS UMR 8486) for kind hospitality during a sabbatical leave in France where part of this work was done, and FAPESP (Brazil) for partial support through grant no.~2017/22166-9.


\end{document}